\newcommand{\eqdef}{:=}
\newcommand{\reqdef}{=:}
\newcommand{\rvec}[1]{\mathbbm{#1}} 		% random vectors 
\newcommand{\rmat}[1]{\mathbbm{#1}} 	% random matrices
\newcommand{\E}{\mathsf{E}}		% expectation
\newcommand{\V}{\mathsf{Var}}			% variance
\newcommand{\stdset}[1]{\mathbbmss{#1}}	% standard sets
\newcommand{\set}[1]{\mathcal{#1}}		% sets
\renewcommand{\vec}[1]{\mathbf{#1}}		% deterministic vector or matrix
\newcommand{\CN}{\mathcal{CN}}			% complex Gaussian
\newcommand{\herm}{\mathsf{H}}			% Hermitian transpose
\newcommand{\T}{\mathsf{T}}				% Transpose
\newtheorem{lemma}{Lemma}
\newtheorem{theorem}{Theorem}
\newtheorem{assumption}{Assumption}
\newtheorem{corollary}{Corollary}
\begin{document}

\title{Team Precoding Towards Scalable Cell-free Massive MIMO Networks
\thanks{This work received partial support from the Huawei funded Chair on  Future Wireless Networks at EURECOM. E.~Bj\"ornson was supported by the Grant 2019-05068 from the Swedish Research Council.}
}

\author{\IEEEauthorblockN{Lorenzo Miretti}
\IEEEauthorblockA{\textit{Communication Systems Dept.} \\
\textit{EURECOM}\\
Sophia Antipolis, France \\
miretti@eurecom.fr}
\and
\IEEEauthorblockN{Emil Bj\"ornson}
\IEEEauthorblockA{\textit{Dept. of Computer Science} \\
\textit{KTH Royal Institute of Technology}\\
Stockholm, Sweden}
\IEEEauthorblockA{\textit{Dept. of Electrical Engineering} \\
\textit{Link\"oping University}\\
Link\"oping, Sweden \\
emilbjo@kth.se}

\and
\IEEEauthorblockN{David Gesbert}
\IEEEauthorblockA{\textit{Communication Systems Dept.} \\
\textit{EURECOM}\\
Sophia Antipolis, France \\
gesbert@eurecom.fr}
}

\IEEEspecialpapernotice{(Invited Paper)} 

\maketitle

\begin{abstract}
In a recent work, we studied a novel precoding design for cell-free networks called team minimum mean-square error (TMMSE) precoding, which rigorously generalizes centralized MMSE precoding to distributed operations based on transmitter-specific channel state information (CSI). Despite its flexibility in handling different cooperation regimes at the CSI sharing level, TMMSE precoding assumes network-wide sharing of the data bearing signals, and hence it is inherently not scalable. In this work, inspired by recent advances on scalable cell-free architectures based on user-centric network clustering techniques, we address this issue by proposing a novel version of the TMMSE precoding design covering partial message sharing. The obtained framework is then successfully applied to derive a variety of novel, optimal, and efficient precoding schemes for a user-centric cell-free network deployed using multiple radio stripes. Numerical simulations of a typical industrial internet-of-things scenario corroborate the gains of TMMSE precoding over competing schemes in terms of spectral efficiency under different power constraints. Although presented in the context of downlink precoding, the results of this paper may be applied also on the uplink.
\end{abstract}

\begin{IEEEkeywords}
distributed precoding, MMSE, cell-free massive MIMO, radio stripes, user-centric
\end{IEEEkeywords}

\section{Introduction} 
Cell-free massive multiple-input multiple-output (MIMO) is one of the main candidate technologies for meeting the ambitious requirements of future wireless generation networks. Its main feature is the combination of the benefits of ultra-dense deployments and \textit{coordinated multi-point} (CoMP) methods, with the goal of offering a uniformly good quality of service to all users. Compared with classical cellular technologies, cell-free massive MIMO gives the opportunity to realize this goal in a significantly more efficient way in terms of bandwidth and energy consumption \cite{ngo2017cellfree,demir2021foundations}. 

From a theoretical perspective, cell-free massive MIMO can be interpreted as an evolution of the well-known \textit{network MIMO} \cite{karakayali2006network} and \textit{cloud radio access network} (C-RAN) \cite{checko2015cran} concepts towards a more scalable, and hence practically feasible, implementation. This crucial enhancement has been essentially enabled by the analytical framework developed within the massive MIMO literature \cite{marzetta2016fundamentals,massivemimobook}, which allowed a more refined performance analysis especially in the presence of imperfect channel state information (CSI). 

In fact, the development of cell-free massive MIMO was largely driven by the necessity of developing practical CoMP methods that do not require expensive network-wide information processing and sharing. Indeed, the early approaches recommended the use of time-division duplex (TDD) operations, and to run very simple distributed precoding/combining schemes at each access point (AP) on the basis of locally measured uplink (UL) channel samples only \cite{ngo2017cellfree}. The massive MIMO regime was then exploited to counteract the losses induced by this very limited CSI configuration. From this original idea, several improvements have been proposed involving more advanced processing and resource allocation techniques \cite{nayebi2017precoding,emil2020cellfree,buzzi2020usercentric,emil2020scalable,du2021cellfree}. The main objective of most of this subsequent literature is the definition of scalable and efficient architectures identifying which set of APs and computational units should serve a given user, with which precoding/combing scheme, and on the basis of which information.

%\begin{figure}[t]
%    \centering
%        \begin{overpic}[width=0.77\columnwidth,tics=5]{decentralized.pdf}
%     \put(11,57){Local}
% 	\put(7,52){controller} 
% 	\put(11,19){Local}
% 	\put(7,14){controller}  
% 	\put(19,38){\textcolor{red}{Limited}}
% 	\put(18,33){\textcolor{red}{fronthaul}} 
% 	\put(60,38){Wireless}
% 	\put(60,33){channel} 
%  \end{overpic}
%  \caption{Pictorial representation of a cell-free network implemented using a distributed architecture for scalability purposes. Each TX is connected to a local controller coordinating the communication on the basis of locally available information, partially shared through a fronthaul of limited capacity.}
%  \label{fig:decentralized} 
%\end{figure}

\subsection{Optimal distributed precoding / combining}
One of the major issues behind moving from centralized to distributed processing is that limiting the CSI sharing makes the optimal precoding / combining design problem much more complicated. Therefore, previous works confined their analysis to the direct transposition of centralized methods taken from the massive MIMO literature such as \textit{maximum-ratio transmission / combining} (MRT / MRC), \textit{zero-forcing} (ZF), and \textit{minimum mean-square error} (MMSE) processing \cite{demir2021foundations}. On top of being heuristic, this approach is often not general enough, in the sense that it can be applied only to very specific CSI sharing patterns similar to the fully distributed setup in \cite{ngo2017cellfree}. This long-lasting problem has been solved only recently in \cite{miretti2021team} using the so-called \textit{team MMSE} (TMMSE) method. Building on powerful multi-agent control theoretical tools, and focusing on simple achievable rate bounds \cite{marzetta2016fundamentals,massivemimobook}, the TMMSE method provides rigorous yet practical guidelines for optimal distributed precoding  design under partial CSI sharing. As an important application, \cite{miretti2021team} derives the closed-form optimal \textit{local} precoders for the fully distributed setup studied in \cite{ngo2017cellfree}, improving upon previously known heuristics especially in the presence of pilot contamination and/or line-of-sight (LoS) components. Furthermore, \cite{miretti2021team} also derives the closed-form optimal precoders assuming that CSI is shared unidirectionally along a serial fronthaul, leading to an efficient recursive algorithm suitable for cell-free massive MIMO networks deployed using a so-called \textit{radio stripe} \cite{interdonato2019ubiquitous,shaik2020mmse}. An important remark is that, although presented in the context of downlink precoding, the results in \cite{miretti2021team} exploit the uplink-downlink (UL-DL) duality principle given by \cite{massivemimobook}, hence they also readily provide optimal UL combining schemes.

\subsection{Paper overview and summary of contributions}
The team MMSE method as exposed in \cite{miretti2021team} presents some limitations. First, it assumes that each user is served by all APs in the network, and hence it is not scalable. Second, the DL case assumes a rather unrealistic sum-power constraint, so that the UL-DL duality principle in \cite{massivemimobook} applies. In Section \ref{sec:TMMSE} of this work, we address the above limitations as follows:
\begin{itemize}
\item We show that the TMMSE method can be optimally merged with the \textit{user-centric} network clustering framework given by \cite{emil2020scalable}. This leads to a novel optimal distributed precoding / combining design jointly considering partial CSI sharing and arbitrary user-AP association rules.
\item We study the effect of a simple suboptimal technique for adapting the DL TMMSE solution to a per-TX power constraint. We argue that the resulting signal-to-noise ratio (SNR) penalty is negligible in practice, making the TMMSE method an attractive solution also under more realistic power constraints.
\end{itemize}
As a second main contribution, in Section \ref{sec:stripes}, we apply our results to an \textit{industrial internet-of-things} (IIoT) scenario with a cell-free massive MIMO system deployed using multiple radio stripes. Specifically, we derive a set of novel closed-form optimal distributed precoders under various information structures. These include the case of a fully distributed user-centric architecture as defined in \cite{buzzi2020usercentric,emil2020scalable,demir2021foundations}, but also more involved yet interesting cases. In particular, we obtain novel schemes that can be efficiently implemented using recursive algorithms along each stripe, significantly reducing the scalability issue of cell-free massive MIMO with respect to the number of APs. Note that, on top of covering scalable user-AP association rules, a relevant difference between this work and \cite{miretti2021team} is that the latter considers only a single radio stripe. The proposed schemes are finally  tested using numerical simulations.

The system model is provided in Section \ref{sec:model}. Similarly to \cite{miretti2021team}, we focus on DL precoding only, but we remark that our results can be readily applied to derive optimal UL combiners. 

\textit{Notation:} We reserve italic letters (e.g., $a$) for scalars and functions, boldface letters (e.g., $\vec{a}$, $\vec{A}$) for vectors and matrices, and calligraphic letters (e.g., $\mathcal{A}$) for sets. Random quantities are distinguished from their realizations as follows: $\rvec{a}$, $\rvec{A}$ denote random vectors and matrices; $A$ denotes a random scalar, or a generic random variable taking values in some unspecified set $\set{A}$. The operators $(\cdot)^\T$, $(\cdot)^\herm$ denote respectively the transpose and Hermitian transpose of matrices and vectors. We denote the Euclidean norm by $\|\cdot\|$, and the Frobenius norm by $\|\cdot\|_\mathrm{F}$. Given $n>2$ random matrices $\rvec{A}_1,\ldots,\rvec{A}_n$ with joint distribution $p(\vec{A}_1,\ldots,\vec{A}_n)$, we say that $\rvec{A}_1\to \rvec{A}_2 \to \ldots \to \rvec{A}_n$ forms a Markov chain if $p(\vec{A}_i|\vec{A}_{i-1},\ldots,\vec{A}_1) = p(\vec{A}_i|\vec{A}_{i-1})$ $\forall i \geq 2$. We use $\mathrm{diag}(\vec{A}_1,\ldots,\vec{A}_n)$ to denote a block-diagonal matrix with the matrices $\vec{A}_1,\ldots,\vec{A}_n$ on its diagonal. We denote by $\vec{e}_n$ the $n$-th column of the identity matrix $\vec{I}$. We use $\prod_{i=l'}^{l}\vec{A}_i := \vec{A}_{l}\vec{A}_{l-1}\ldots\vec{A}_{l'}$ for integers $l \geq l' \geq 1$ to denote the \textit{left} product chain of $l-l'+1$ ordered matrices of compatible dimension, and we adopt the convention $\prod_{i=l'}^{l} \vec{A}_i = \vec{I}$ for $l<l'$. 

\section{System Model}
\label{sec:model}
\subsection{Channel model}
Consider a network of $L$ transmitters (TXs) indexed by $\mathcal{L}:=\{1,\ldots,L\}$, each of them equipped with $N$ antennas, and $K$ single-antenna receivers (RXs) indexed by $\mathcal{K}:=\{1,\ldots,K\}$. Let an arbitrary channel use be governed by the MIMO channel law
\begin{equation}
\rvec{y} = \sum_{l=1}^L \rmat{H}_l\rvec{x}_l + \rvec{n},
\end{equation}
where the $k$-th element of $\rvec{y} \in \stdset{C}^{K}$ is the received signal at RX $k$, $\rmat{H}_l \in \stdset{C}^{K\times N}$ is a sample of a stationary ergodic random process modelling the fading between TX $l$ and all RXs, $\rvec{x}_l \in \stdset{C}^N$ is the transmitted signal at TX~$l$, and $\rvec{n}\sim \CN(\vec{0},\vec{I})$ is a sample of a white noise process. This channel model is relevant, e.g, for narrowband or wideband OFDM systems where transmission spans several realizations of the fading process. For most parts of this work, we do not specify the distribution of $\rmat{H}:=\begin{bmatrix}
\rmat{H}_1,\ldots,\rmat{H}_L
\end{bmatrix}$. However, we reasonably assume the channel submatrices corresponding to different TX-RX pairs to be mutually independent, and finite fading power $\E[\|\rmat{H}\|_{\mathrm{F}}^2]<\infty$. Furthermore, we focus on $N<K$, that is, on the regime where TX cooperation is particularly useful.

\subsection{Partial message and CSIT sharing}
In order to implement scalable cell-free massive MIMO networks, there is the need to limit the information sharing between the processing units controlling the distributed TXs.\footnote{In this work, we assume that each TX is controlled by a separate processing unit, but the results can be easily generalized to the case where multiple TXs are controlled by the same processing unit.} In particular, by focusing on two of the dominant components in the information sharing burden, we consider the following information constraints: 

\textbf{Partial message sharing:}
Let $U_k \sim \CN(0,1)$ be a sample of the i.i.d. data bearing signal for RX $k \in \set{K}$. We assume the message $U_k$ to be available only at a subset $\set{L}_k \subseteq \set{L}$ of the TXs. We do not specify how the sets $\set{L}_k$ are formed, but we remark that scalable cell-free networks are expected to implement some kind of user-centric rule \cite{demir2021foundations} allowing each RX to be granted service without relying on the notion of cell. Given the sets $\set{L}_k$, we also define 
\begin{equation}
\set{K}_l \eqdef \{k \in \set{K} \text{ s.t. } l \in \set{L}_k \}, \quad l \in \set{L},
\end{equation}
that is, the subset $\set{K}_l\subseteq \set{K}$ of RXs being served by TX $l$.

\textbf{Partial CSIT sharing:}
Consider a \textit{distributed} CSIT configuration \cite{miretti2021team}, i.e., where each TX has some local side information $S_l\in \set{S}_l$ about the channel state $\rmat{H}$. For example, $S_l$ may include local measurements of the local channel $\rvec{H}_l$,  and the output of some CSIT sharing procedure. We assume $(\rmat{H},S_1,\ldots,S_L)$ to be a sample of an ergodic stationary process with first order joint distribution fixed by nature/design, and known by all TXs. Note that, due to the looser time sensitivity, sharing statistical information is in practice less challenging than sharing estimates of $\rvec{H}$. 

\subsection{Distributed linear precoding}
Given the above information constraints, we then let each TX $l$ form its transmit signal according to the following distributed linear precoding scheme:
\begin{equation}\label{eq:distributed_precoding}
\rvec{x}_l = \sum_{k\in \set{K}_l}\sqrt{p_k}\rvec{t}_{l,k}U_k,\quad \rvec{t}_{l,k} = \rvec{t}_{l,k}(S_l),
\end{equation}
where $(p_1,\ldots,p_K) \reqdef \vec{p}\in \stdset{R}_+^K$ is a fixed power allocation vector, and where $\rvec{t}_{l,k}\in \stdset{C}^N$ for $k\in \set{K}_l$ is a linear precoder applied at TX $l$ to message $U_k$ based only on the local information $S_l$. More formally, by letting $(\Omega,\Sigma,\mathbb{P}) $ be the underlying probability space over which all random quantities are defined, we constrain $\rvec{t}_{l,k}$ for $k\in \set{K}_l$ within the vector space $\mathcal{T}_l$ of square-integrable $\Sigma_l$-measurable functions $\Omega \to \stdset{C}^N$, where $\Sigma_l\subseteq \Sigma$ denotes the sub-$\sigma$-algebra generated by $S_l$ on $\Omega$, called the \textit{information subfield} of TX $l$ \cite{yukselbook}. We finally denote the full precoding vector for message $U_k$ by $\rvec{t}_k^\T := \begin{bmatrix}
\rvec{t}_{1,k}^\T & \ldots & \rvec{t}_{L,k}^\T 
\end{bmatrix}^\T $, and let $\rvec{t}_k \in \mathcal{T}^{(k)}:=\prod_{l=1}^L\mathcal{T}_l^{(k)}$, where
\begin{equation}
\set{T}_l^{(k)} \eqdef \begin{cases} \set{T}_l & \text{if } l \in \set{L}_k, \\
\{\rvec{t}_{l,k}(S_l) = \vec{0} \; \mathrm{a.s.}\} & \text{otherwise.} \end{cases}
\end{equation}

\subsection{Performance metric}
We measure the network performance under the specified transmission scheme by using Shannon (ergodic) achievable rates predicted by the popular \textit{hardening} bound \cite{marzetta2016fundamentals,massivemimobook},
\begin{equation}\label{eq:hardening_bound}
R_k^{\mathrm{hard}} \eqdef  \log\left(1+\mathrm{SINR}_k^{\mathrm{DL}}\right),
\end{equation}
\begin{equation}\mathrm{SINR}_k^{\mathrm{DL}} \eqdef \dfrac{p_k|\E[\rvec{g}_k^\herm\rvec{t}_k]|^2}{p_k\mathrm{Var}[\rvec{g}_k^\herm\rvec{t}_k] + \sum_{j\neq k}p_j\E[|\rvec{g}_k^\herm\rvec{t}_j|^2]+1},
\end{equation}
where $\begin{bmatrix}
\rvec{g}_1 & \ldots & \rvec{g}_K \end{bmatrix} :=\rmat{H}^\herm
$. 
We then let $\mathcal{R}(P_1,\ldots,P_L)$ be the union of all simultaneously achievable rate tuples $(R_1,\ldots,R_K)\in \stdset{R}_+^K$ such that $R_k \leq R_k^{\mathrm{hard}}$ $\forall k \in \mathcal{K}$ for some set of distributed precoders $\{\rvec{t}_k \in \set{T}^{(k)}\}_{k=1}^K$ and power allocation $\vec{p}\in \set{R}_+^K$ satisfying 
\begin{equation}
\sum_{k\in\set{K}}p_k\E[\|\rvec{t}_{l,k}\|^2]\leq P_l<\infty, \quad \forall l \in \set{L}.
\end{equation}
The set $\mathcal{R}(P_1,\ldots,P_L)$ is an inner bound for the capacity region of the considered network with partial message and CSIT sharing, subject to a long-term per-TX power constraint $\E[\|\rvec{x}_l\|^2]\leq P_l$, $\forall l \in \set{L}$. The goal of this paper is to provide a method for designing distributed linear precoders spanning the largest possible region $\set{R}'\subseteq \set{R}(P_1,\ldots,P_L)$.%Due to its importance in system design and resource allocation, we consider the notion of (weak) Pareto optimality on $\mathcal{R}^{\mathrm{hard}}$ and we mostly focus on the (weak) Pareto boundary of $\mathcal{R}^{\mathrm{hard}}$, denoted  by $\partial \mathcal{R}^{\mathrm{hard}}$ \cite{emil2013optimal}. 

\section{Team MMSE Precoding}
\label{sec:TMMSE}
\subsection{UL-DL duality and the MSE criterion}\label{ssec:duality}
In order to simplify the challenging problem of optimal distributed precoding design, we first focus on a relaxed version by assuming a \textit{sum} power constraint $\sum_{l=1}^L P_l \leq P$. Specifically, we focus on 
\begin{equation}
\set{R}_{\mathrm{sum}}(P) \eqdef  \bigcup_{\sum_{l=1}^{L}P_l \leq P} \set{R}(P_1,\ldots,P_L),
\end{equation}
which is an outer bound to $\set{R}(P_1,\ldots,P_L)$ $\forall (P_1,\ldots,P_L)$ s.t. $\sum_{l=1}^{L}P_l \leq P$.
This simplification allows us to exploit the UL-DL duality principle \cite[Th.~4.8]{massivemimobook}, and obtain a convenient design criterion as established by the following result. 

\begin{theorem}\label{th:duality}
Let $\vec{w}\eqdef(w_1,\ldots,w_K)$ be a vector of weights belonging to the simplex $\set{W} \eqdef\{\vec{w} \in \stdset{R}_+^K \; | \; \sum_{k=1}^Kw_k = 1\}$, and define $\vec{W}\eqdef \mathrm{diag}(\vec{w})$. Then, all rate tuples $(R_1,\ldots,R_K)$ satisfying
\begin{equation}\label{eq:MSE_bound}
R_k\leq \log(\mathrm{MSE}(\rvec{t}_k))^{-1},
\end{equation}
\begin{equation}
\mathrm{MSE}_k(\rvec{t}_k) \eqdef \E\left[ \left\|\vec{W}^{\frac{1}{2}}\rmat{H}\rvec{t}_k - \vec{e}_k\right\|^2 +\dfrac{1}{P}\|\rvec{t}_k\|^2 \right],
\end{equation}
for some $\rvec{t}_k \in \set{T}^{(k)}$,  belong to $\set{R}_{\mathrm{sum}}(P)$. Furthermore, if $\rvec{t}_k^\star$ minimizes $\mathrm{MSE}_k(\rvec{t}_k)$ over $\set{T}^{(k)}$, then $(R_1,\ldots,R_K)$, $R_k = \log(\mathrm{MSE}(\rvec{t}^\star_k))^{-1}$ is Pareto optimal w.r.t. $\set{R}_{\mathrm{sum}}(P)$, and every tuple in the Pareto boundary of $\set{R}_{\mathrm{sum}}(P)$ is achievable for some $\vec{w} \in \set{W}$.
\end{theorem}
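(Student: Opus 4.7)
The plan is to collapse the long-term per-TX power constraints into a sum power constraint and invoke the UL-DL duality principle of \cite[Th.~4.8]{massivemimobook} to pass from the downlink hardening region $\set{R}_{\mathrm{sum}}(P)$ to a dual uplink hardening region over the same information subfields. Because a linear operator that is $\Sigma_l$-measurable at TX $l$ can be reused either as a DL precoder or as a UL combiner without altering its measurability, the constraint $\rvec{t}_k\in \set{T}^{(k)}$ transports across the duality intact. The payoff is that, for a fixed per-user UL power profile, the hardening-bound achievable SINRs decouple across users once re-expressed via a mean-square error criterion, which is exactly the form appearing in \eqref{eq:MSE_bound}.

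To make the link explicit, I would introduce the virtual UL model $\rvec{y}_{\mathrm{UL}} = \sum_{j=1}^K \sqrt{Pw_j}\,\rvec{g}_j s_j + \rvec{n}_{\mathrm{UL}}$ with $s_j$ i.i.d.\ $\CN(0,1)$ and unit-variance noise, and note that the total UL power is $\sum_j Pw_j = P$. Associating to each $\rvec{t}_k\in\set{T}^{(k)}$ the combiner $\tilde{\rvec{c}}_k \eqdef \rvec{t}_k/\sqrt{P}$ (which trivially remains in $\set{T}^{(k)}$), a direct expansion identifies $\E[|\tilde{\rvec{c}}_k^\herm \rvec{y}_{\mathrm{UL}} - s_k|^2]$ with $\mathrm{MSE}_k(\rvec{t}_k)$: the $\sqrt{w_j}$ factors in $\vec{W}^{1/2}\rmat{H}\rvec{t}_k$ come from the per-user UL powers, and the $\|\rvec{t}_k\|^2/P$ term from the unit-variance UL noise. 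A well-known rescaling trick (multiplying $\tilde{\rvec{c}}_k$ by a deterministic scalar that normalizes the expected effective channel to unity, leaving both the hardening SINR and the membership in $\set{T}^{(k)}$ unchanged) then delivers $1+\mathrm{SINR}_k^{\mathrm{UL,hard}} \geq (\mathrm{MSE}_k(\rvec{t}_k))^{-1}$, with equality under the optimal scaling. This establishes the achievability claim.

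For the Pareto statement I would exploit two further facts: (i) $\mathrm{MSE}_k(\rvec{t}_k)$ is a convex quadratic functional on the Hilbert space $\set{T}^{(k)}$ depending only on $\rvec{t}_k$, hence its minimizer $\rvec{t}_k^\star$ exists, is essentially unique, and maximizes $R_k$ independently across users for any fixed $\vec{w}$; and (ii) the map $\vec{w}\mapsto (Pw_1,\ldots,Pw_K)$ enumerates precisely the UL power profiles summing to $P$, which by duality parameterize the Pareto boundary of $\set{R}_{\mathrm{sum}}(P)$. A continuity and compactness argument on the simplex $\set{W}$, combined with the surjectivity of the SINR-region parameterization by UL powers, then yields both that the tuple $(R_1^\star,\ldots,R_K^\star)$ is Pareto-optimal and that every Pareto point is realized by some $\vec{w}\in\set{W}$.

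The hard part will be verifying that the distributed information structure does not break the UL-DL duality: since \cite[Th.~4.8]{massivemimobook} is typically phrased for centralized linear processing, one must check that the construction of the dual UL/DL power allocations is independent of the particular sub-$\sigma$-algebra the precoders/combiners are measurable with respect to. A secondary subtlety is that $\rvec{t}_k$ is only constrained to be square-integrable, so MSE minimization is an infinite-dimensional orthogonal projection onto a closed subspace of $\set{T}^{(k)}$; the tightness of the rate bound at $\rvec{t}_k^\star$ should follow from the orthogonality properties of conditional expectations, but deserves a careful statement before being invoked.
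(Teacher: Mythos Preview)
Your proposal is correct and follows essentially the same route as the paper: invoke the UL--DL duality of \cite[Th.~4.8]{massivemimobook} under a sum power constraint, identify the dual UL hardening SINR with the reciprocal MSE via the virtual UL model with powers $Pw_k$, and parametrize the Pareto boundary by the UL power vector $\vec{w}\in\set{W}$. The paper's proof is simply a one-line reduction to \cite[Th.~1]{miretti2021team} (which contains exactly the argument you sketch for the full message sharing case $\set{T}=\prod_l\set{T}_l$), together with the observation---which you also make explicitly---that replacing $\set{T}$ by the smaller product $\set{T}^{(k)}$ leaves every step of the duality and MSE--SINR argument intact, since the dual power allocation construction is oblivious to the measurability constraints on the precoders/combiners.
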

\begin{proof}
The proof for the special case of perfect message sharing, i.e., by considering $\set{T} \eqdef \prod_{l=1}^L\set{T}_l$ instead of $\set{T}^{(k)}$ $\forall k \in \set{K}$, is given by \cite[Th.~1]{miretti2021team}. The proof for the general case follows by observing that none of the steps in the proof of \cite[Th.~1]{miretti2021team} (and, in particular, the UL-DL duality principle) is impacted if we replace $\set{T}$ with $\set{T}^{(k)}$.
\end{proof}
Theorem~\ref{th:duality} states that, if we consider a sum power constraint, the optimal distributed precoding design can be obtained by solving (for each RX $k \in \set{K}$)
\begin{equation}\label{eq:UC-TMMSE}
\underset{\rvec{t}_k\in \set{T}^{(k)}}{\text{minimize}} \;  \mathrm{MSE}_k(\rvec{t}_k),
\end{equation}
and where the operating point on the boundary of the performance region $\set{R}_{\mathrm{sum}}(P)$ is parametrized by the $K-1$ real coefficients $\vec{w}\in \set{W}$. The operating point $\vec{w}$ should be ideally selected by solving a dual UL power allocation problem optimizing some network utility. However, in practice, $\vec{w}$ is often chosen heuristically, for instance by interpreting its element as RXs' priorities. Given $\vec{w}\in \set{W}$ and a set of distributed precoders $\{\rvec{t}_k\}_{k=1}^K \in  \set{T}^{(k)}$, the desired achievable scheme is then obtained by computing a feasible DL power allocation vector $\vec{p}$ solving the linear system of equations in \cite[Theorem~4.8]{massivemimobook}. 

Problem~\eqref{eq:UC-TMMSE} extends the so-called \textit{team} MMSE (TMMSE) precoding design criterion, studied in \cite{miretti2021team}, to the case of partial message sharing. Although more general, in this work we also refer to the solution of Problem~\eqref{eq:UC-TMMSE} as the optimal TMMSE precoders. This name originates from the specific solution approach that we adopt in this work, which is outlined in the following section.

\subsection{Team theory for distributed precoding design}
In this section we provide a useful set of necessary and sufficient conditions for optimal TMMSE precoding design. The key idea is that, by taking a user-centric perspective, we can optimize each precoder $\rvec{t}_k\in \set{T}^{(k)}$ by assuming full message sharing within a reduced network composed only by the TXs in $\set{L}_k$. Therefore, Problem \eqref{eq:UC-TMMSE} can be solved by means of the \textit{team theoretical} arguments \cite{yukselbook} developed in \cite{miretti2021team} for the full message sharing case. 

\begin{theorem}\label{th:quadratic_teams}
If $\E[\|\rmat{H}^\herm\rmat{H}\|^2_{\mathrm{F}}]<\infty$,
then Problem~\eqref{eq:UC-TMMSE} admits a unique optimal solution, which is also the unique solution $\rvec{t}_k^\star \in \set{T}^{(k)}$ satisfying a.s. ($\forall l \in \mathcal{L}^{(k)}$)
\begin{equation}\label{eq:UC_stationary}
\E[\rmat{O}_{l,l}|S_l]\rvec{t}^\star_{l,k}(S_l) + \sum_{j \in \set{L}^{(k)}\backslash \{l\}}\E[\rmat{O}_{l,j}\rvec{t}^\star_{j,k}|S_l] = \E[\rvec{g}_{l,k}|S_l]\vec{W}^{\frac{1}{2}}, 
\end{equation}
where $\rmat{O}_{l,l} \eqdef \rmat{H}_l^\herm\vec{W}\rmat{H}_l + P^{-1}\vec{I}$, $\rmat{O}_{l,j} \eqdef \rmat{H}_l^\herm\vec{W}\rmat{H}_j$ $\forall l\neq j$, and $\rvec{g}_{l,k}$ is the subvector of $\rvec{g}_k$ corresponding to the channel of TX $l$.
\end{theorem}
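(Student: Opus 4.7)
The plan is to adapt the team-theoretic arguments used in \cite[Thm.~2]{miretti2021team} for the full-message-sharing case. The key observation is that for each fixed RX~$k$, the feasibility constraint $\rvec{t}_{l,k}=\vec{0}$ a.s. for $l\notin\set{L}_k$ reduces Problem~\eqref{eq:UC-TMMSE} to a TMMSE problem over the reduced network indexed by $\set{L}_k$, so the argument essentially amounts to ``dropping'' the inactive coordinates and reading off the stationarity conditions on the active ones.

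Concretely, I would first recognize that $\set{T}^{(k)}$ is a closed linear subspace of the Hilbert space $\bigoplus_{l=1}^L L^2(\Omega,\Sigma_l;\stdset{C}^N)$ equipped with the inner product $\langle \rvec{t},\rvec{t}'\rangle = \E[\rvec{t}^\herm \rvec{t}']$. Rewriting
\begin{equation*}
\mathrm{MSE}_k(\rvec{t}_k) = \E[\rvec{t}_k^\herm \rmat{O}\,\rvec{t}_k] - 2\,\mathrm{Re}\,\E[\vec{e}_k^\herm \vec{W}^{1/2}\rmat{H}\rvec{t}_k] + w_k,
\end{equation*}
with $\rmat{O}\eqdef \rmat{H}^\herm\vec{W}\rmat{H}+P^{-1}\vec{I} \succeq P^{-1}\vec{I}$, exhibits the cost as strongly convex, continuous, and coercive on $\set{T}^{(k)}$; together with the boundedness induced by $\E[\|\rmat{H}^\herm\rmat{H}\|_\mathrm{F}^2]<\infty$, standard Hilbert space projection arguments yield existence and uniqueness of a minimizer $\rvec{t}_k^\star$. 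To derive~\eqref{eq:UC_stationary}, I would exploit the product decomposition $\set{T}^{(k)}=\bigoplus_{l\in\set{L}_k}\set{T}_l$ (with trivial factors elsewhere) and compute the Gateaux derivative at $\rvec{t}_k^\star$ along perturbations $\bm{\delta}_l\in\set{T}_l$ applied to a single coordinate $l\in\set{L}_k$. The resulting first-order condition reads
\begin{equation*}
\E\!\left[\bm{\delta}_l^\herm\!\left(\sum_{j\in\set{L}_k}\rmat{O}_{l,j}\rvec{t}_{j,k}^\star - \rvec{g}_{l,k}\vec{W}^{1/2}\right)\right] = 0\quad \forall \bm{\delta}_l\in\set{T}_l,
\end{equation*}
and a tower-property plus $\Sigma_l$-measurability argument (i.e., the conditional-expectation version of the fundamental lemma of calculus of variations) then produces the pointwise a.s. identity~\eqref{eq:UC_stationary}. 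Conversely, strict convexity forces any solution of~\eqref{eq:UC_stationary} to coincide with the unique minimizer, establishing the characterization.

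The main obstacle I anticipate is purely technical: verifying that first-order conditions can be checked coordinate-by-coordinate (justified by the direct-sum decomposition of $\set{T}^{(k)}$ induced by the independent measurability constraints), and that the conditional expectations appearing in~\eqref{eq:UC_stationary} are well-defined as $L^2$ objects. Both issues are handled by the moment assumption $\E[\|\rmat{H}^\herm\rmat{H}\|_\mathrm{F}^2]<\infty$ via Cauchy-Schwarz, which guarantees that $\rmat{O}$ induces a bounded self-adjoint operator on the product Hilbert space. Once these points are in place, the conclusion is essentially \cite[Thm.~2]{miretti2021team} transported to the reduced network $\set{L}_k$, leveraging that the restriction $\set{T}\mapsto\set{T}^{(k)}$ preserves the linear-quadratic structure exactly as the proof of Theorem~\ref{th:duality} did for the UL-DL duality step.
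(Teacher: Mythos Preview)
Your proposal is correct and follows essentially the same route as the paper: both arguments hinge on observing that, since $\rvec{t}_{l,k}=\vec{0}$ for $l\notin\set{L}_k$, Problem~\eqref{eq:UC-TMMSE} is identical in form to the full-message-sharing TMMSE problem of \cite{miretti2021team} over the reduced network $\set{L}_k$, so that the quadratic-team results there apply verbatim. The only difference is expository: the paper invokes \cite{miretti2021team} as a black box after the reduction, whereas you sketch the underlying Hilbert-space and Gateaux-derivative argument explicitly.
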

\begin{proof}
We rewrite the objective of \eqref{eq:UC-TMMSE} by replacing $\rmat{H}$ with a reduced channel matrix $\rmat{H}^{(k)}$ containing only the columns related to the TXs in $\mathcal{L}_k$, and $\rvec{t}_k$ with a corresponding reduced distributed precoding vector $\rvec{t}_k^{(k)} \in \prod_{l\in\mathcal{L}_k}\set{T}_l$, yielding
\begin{equation}
\E\left[ \left\|\vec{W}^{\frac{1}{2}}\rmat{H}^{(k)}\rvec{t}_k^{(k)} - \vec{e}_k\right\|^2 +\dfrac{1}{P}\|\rvec{t}_k^{(k)}\|^2 \right].
\end{equation}
We then notice that the minimization problem over $\rvec{t}_k^{(k)} \in \prod_{l\in\mathcal{L}_k}\set{T}_l$ has exactly the same form as the TMMSE precoding design problem studied in \cite{miretti2021team}. Therefore, the results of \cite{miretti2021team} based on the theory of \textit{quadratic} teams \cite{yukselbook} apply, and the proof readily follows.
\end{proof}

The optimality conditions \eqref{eq:UC_stationary} correspond to an infinite dimensional linear system of equations, which can be  solved via one of the many approximation methods available in the literature, such as the ones surveyed in \cite{yukselbook}. Importantly, \eqref{eq:UC_stationary} does not assume any particular CSIT acquisition scheme, and it only requires the mild condition $\E[\|\rmat{H}^\herm\rmat{H}\|^2_{\mathrm{F}}]<\infty$ on the fading distribution, which is satisfied, e.g., for all physically consistent distributions with bounded support, or for the classical Gaussian fading model. In the second part of this work, we focus on special yet relevant cases where \eqref{eq:UC_stationary} can be solved analytically. To this end, we first rewrite \eqref{eq:UC_stationary} under two additional assumptions, which are consistent with the canonical cell-free massive MIMO paradigm \cite{demir2021foundations}.

\begin{assumption}[Local channel estimation]\label{ass:TDD}
For every $l\in \mathcal{L}$, let $\hat{\rmat{H}}_{l}$ be the estimate of the local channel $\rmat{H}_l$ available at TX $l$, and $\rmat{E}_l:= \rmat{H}_l - \hat{\rmat{H}}_{l}$ be the local  estimation error. Assume that $\hat{\rmat{H}}_{l}$ and $\rmat{E}_l$ are independent. Furthermore, assume $\E[\rmat{E}_l] = \vec{0}$, and that $\E[\rmat{E}_l^\herm\vec{W}\rmat{E}_l] =: \vec{\Psi}_l$ has finite elements. Finally, assume that $(\hat{\rmat{H}}_{l},\rmat{E}_{l})$ and $(\hat{\rmat{H}}_{j},\rmat{E}_{j})$ are independent for $l\neq j$.
\end{assumption} 

\begin{assumption}[CSIT sharing mechanism]\label{ass:CSIT_sharing}
For every $(l,j) \in \mathcal{L}^2$ s.t. $l\neq j$, assume the following Markov chain: 
\begin{equation}
\rmat{H}_l \to \hat{\rmat{H}}_{l} \to S_l \to S_j \to \hat{\rmat{H}}_{j} \to \rmat{H}_j.
\end{equation}
\end{assumption}
Assumption \ref{ass:TDD} holds, e.g., for pilot-based MMSE estimates of Gaussian channels exploiting channel reciprocity in time-division duplex (TDD) systems \cite{demir2021foundations}.  Assumption \ref{ass:CSIT_sharing} essentially states that all the  available information about the local channel $\rmat{H}_l$ is fully contained in $S_l$ at TX $l$, and that TX $j$ can only obtain a degraded version of it through some arbitrary CSIT sharing mechanism. 

\begin{lemma}\label{lem:stationarity_imperfect}
Suppose the assumption of Theorem~\ref{th:quadratic_teams}, Assumption~\ref{ass:TDD}, and Assumption~\ref{ass:CSIT_sharing} hold. Then, the unique solution to Problem~\eqref{eq:UC-TMMSE} is given by the unique $\rvec{t}_k^\star \in \set{T}^{(k)}$ satisfying ($\forall l \in \mathcal{L}_k$)
\begin{equation}\label{eq:stationary_imperfect}
\rvec{t}_{l,k}^\star(S_l) =  \rmat{T}_l\left(\vec{e}_k-\sum_{j \in \set{L}_k\backslash  \{l\}} \vec{W}^{\frac{1}{2}}\E\left[\hat{\rmat{H}}_{j}\rvec{t}^\star_{j,k}\Big|S_l\right] \right) \quad \mathrm{a.s.}, 
\end{equation} 
where $\rmat{T}_l:=\left(\hat{\rmat{H}}_{l}^\herm\vec{W}\hat{\rmat{H}}_{l}+\vec{\Psi}_l + P^{-1}\vec{I}\right)^{-1}\hat{\rmat{H}}_{l}^\herm\vec{W}^{\frac{1}{2}}$.
\end{lemma}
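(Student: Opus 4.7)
My plan is to deduce the statement directly from Theorem~\ref{th:quadratic_teams} by specializing the stationarity condition \eqref{eq:UC_stationary} to the TDD/degraded-CSIT structure encoded in Assumptions~\ref{ass:TDD}--\ref{ass:CSIT_sharing}. Assumption~\ref{ass:TDD} together with the finite-power hypothesis implies the summability condition $\E[\|\rmat{H}^\herm\rmat{H}\|_{\mathrm{F}}^2]<\infty$, so Theorem~\ref{th:quadratic_teams} applies and the unique optimum of \eqref{eq:UC-TMMSE} is the unique $\rvec{t}_k^\star\in\set{T}^{(k)}$ satisfying \eqref{eq:UC_stationary}. It therefore suffices to rewrite \eqref{eq:UC_stationary} in the form \eqref{eq:stationary_imperfect}.

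First I would simplify the diagonal operator $\E[\rmat{O}_{l,l}\,|\,S_l]$. Writing $\rmat{H}_l=\hat{\rmat{H}}_l+\rmat{E}_l$, using that $\hat{\rmat{H}}_l$ is $S_l$-measurable (the local estimate is part of TX~$l$'s side information), and combining $\rmat{E}_l\perp\hat{\rmat{H}}_l$ from Assumption~\ref{ass:TDD} with the chain $\hat{\rmat{H}}_l\to S_l$ from Assumption~\ref{ass:CSIT_sharing} to conclude $\rmat{E}_l\perp S_l$, the two cross-error terms vanish in conditional expectation and the quadratic error term contributes $\vec{\Psi}_l$. This yields
\begin{equation*}
\E[\rmat{O}_{l,l}\,|\,S_l] = \hat{\rmat{H}}_l^\herm\vec{W}\hat{\rmat{H}}_l+\vec{\Psi}_l+P^{-1}\vec{I},
\end{equation*}
which is positive definite (the term $\vec{\Psi}_l+P^{-1}\vec{I}$ already is). The same zero-mean argument handles the right-hand side of \eqref{eq:UC_stationary}, giving $\E[\rvec{g}_{l,k}\,|\,S_l]=\hat{\rmat{H}}_l^\herm\vec{e}_k$.

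Next I would tackle the off-diagonal terms $\E[\rmat{O}_{l,j}\rvec{t}_{j,k}^\star\,|\,S_l]$ for $j\ne l$, which is the analytical heart of the argument. Expanding $(\hat{\rmat{H}}_l+\rmat{E}_l)^\herm\vec{W}(\hat{\rmat{H}}_j+\rmat{E}_j)\rvec{t}_{j,k}^\star$ into four pieces, I would use the cross-TX independence of the pairs $(\hat{\rmat{H}}_i,\rmat{E}_i)$ and the within-TX independence $\rmat{E}_i\perp\hat{\rmat{H}}_i$ from Assumption~\ref{ass:TDD}, together with the Markov chain of Assumption~\ref{ass:CSIT_sharing}, to argue that $\rmat{E}_l$ is independent of the triple $(S_l,S_j,\hat{\rmat{H}}_j)$ and, symmetrically, $\rmat{E}_j$ is independent of $(S_l,S_j,\hat{\rmat{H}}_l)$. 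Since $\rvec{t}_{j,k}^\star$ is $S_j$-measurable, every piece containing at least one error matrix vanishes under $\E[\,\cdot\,|\,S_l]$, leaving only
\begin{equation*}
\E[\rmat{O}_{l,j}\rvec{t}_{j,k}^\star\,|\,S_l]=\hat{\rmat{H}}_l^\herm\vec{W}\,\E\bigl[\hat{\rmat{H}}_j\rvec{t}_{j,k}^\star\,\big|\,S_l\bigr],
\end{equation*}
where the $S_l$-measurable factor $\hat{\rmat{H}}_l^\herm$ has been pulled out.

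Plugging these three simplifications into \eqref{eq:UC_stationary}, inverting the positive-definite diagonal operator, and solving for $\rvec{t}_{l,k}^\star(S_l)$ produces exactly \eqref{eq:stationary_imperfect} with $\rmat{T}_l$ as stated; uniqueness of the solution is inherited directly from Theorem~\ref{th:quadratic_teams}. The hardest part will be the probabilistic bookkeeping in the off-diagonal step: I have to extract, from the joint structure of Assumptions~\ref{ass:TDD} and~\ref{ass:CSIT_sharing}, the precise independence relations between each local error $\rmat{E}_i$ and the full pair $(S_l,S_j)$, taking care not to confuse what is locally measurable at TX~$l$ (and hence pulls out of $\E[\,\cdot\,|\,S_l]$) with what is only accessible in distribution through the CSIT sharing chain.
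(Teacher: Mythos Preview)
Your proposal is correct and follows exactly the route the paper indicates: start from the stationarity conditions \eqref{eq:UC_stationary} of Theorem~\ref{th:quadratic_teams} and simplify each conditional expectation using the decomposition $\rmat{H}_l=\hat{\rmat{H}}_l+\rmat{E}_l$ together with the independence and Markov-chain relations of Assumptions~\ref{ass:TDD}--\ref{ass:CSIT_sharing}; the paper's own proof merely states that these are ``simple manipulations'' and defers the details to \cite[Lem.~2]{miretti2021team}, which your write-up spells out. One small correction: you need not (and cannot) deduce the fourth-moment condition $\E[\|\rmat{H}^\herm\rmat{H}\|_{\mathrm{F}}^2]<\infty$ from Assumption~\ref{ass:TDD} and the finite-power hypothesis $\E[\|\rmat{H}\|_{\mathrm{F}}^2]<\infty$ alone---it is assumed directly in the lemma via ``the assumption of Theorem~\ref{th:quadratic_teams}''.
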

\begin{proof}
The proof follows from simple manipulations of \eqref{eq:UC_stationary}. The details are similar to the full message sharing case given by \cite[Lem.~2]{miretti2021team}, hence they are omitted.
\end{proof}
  
\subsection{Per-TX power constraint}
The main drawback of the proposed TMMSE precoding design is that it relies on the UL-DL duality principle for fading channels given by \cite{massivemimobook}, which requires a sum power constraint. However, for deterministic channels, a weaker form of the UL-DL duality principle and of the Pareto boundary parametrization exists also under a per-TX power constraint (see, e.g., \cite{yu2007transmitter}). Although we do not rule out the possibility of establishing a similar result for fading channels, in the following we approximate the optimal distributed precoders spanning $\mathcal{R}(P_1,\ldots,P_L)$ by using a much simpler suboptimal approach. 

Suppose $\rvec{x}_l^\star$ is the $l$-th TX signal obtained using optimal TMMSE precoding under a sum power constraint $\sum_{l=1}^LP_l = P$, and consider the trivial adaptation $(\forall l \in \set{L})$
\begin{equation}\label{eq:power_scaling}
\rvec{x}_l = \frac{1}{\nu} \rvec{x}_l^\star, \quad \nu^2 =  \max\left(1,\frac{\E[\|\rvec{x}^\star_1\|^2]}{P_1},\ldots,\frac{\E[\|\rvec{x}^\star_L\|^2]}{P_L}\right),
\end{equation}
which scales everything down until the per-TX power constraint is satisfied. In terms of performance, if $\{\rvec{t}^\star_k\}_{k=1}^K$ is the set of TMMSE precoders with corresponding power allocation $\{p_k^\star\}_{k=1}^K$ generating $\rvec{x}_l^\star$, the above method produces achievable rates given by $(\forall k \in \set{K})$
\begin{equation}
R_k = \log\left(1+ \dfrac{p^
\star_k|\E[\rvec{g}_k^\herm\rvec{t}^\star_k]|^2}{p^\star_k\V[\rvec{g}_k^\herm\rvec{t}^\star_k] + \sum_{j\neq k}p^\star_j\E[|\rvec{g}_k^\herm\rvec{t}^\star_j|^2]+\nu^2}\right), 
\end{equation}
that is, compared to the optimal solution assuming a sum power constraint, there is a SNR loss $\nu^2$ proportional to the largest violation of the TX power constraints. Nevertheless, this loss may be marginal if the interference terms are still dominating the denominator in the rate expressions. Therefore, we argue that the simple power scaling method described above may be particularly suitable for the setup of this work, where partial message and CSIT sharing indeed often induce an interference limited regime.\footnote{The saturation of $R_k$ w.r.t. the SNR is due to the simple point-to-point coding scheme achieving \eqref{eq:hardening_bound} which treats interference as noise. The extension of our results to more advanced coding schemes based, e.g., on rate-splitting and (partial) interference decoding \cite{clerckx2016rate} is an interesting future direction.}

Other variations may be also considered, for example we can repeat the TMMSE precoders computation for decreasing values of $P$ until the per-TX power constraint is satisfied. Furthermore, power scaling coefficients coupled with clipping techniques can be also used to adapt long-term power constraints to short-term power constraints of the type $\|\rvec{x}_l\|^2\leq P_l$ almost surely. We leave further discussions on power constraints for future work. 

\section{Case Study: Network of Radio Stripes}
\label{sec:stripes}
\subsection{Network architecture}
We consider an indoor IIoT scenario similar to \cite{interdonato2019ubiquitous} and depicted in Fig.~\ref{fig:IIoT}, where $L=100$ single-antenna TXs are arranged over a rectangular area (e.g., the ceiling of a warehouse) of $100 \times 50$ [$\mathrm{m}^2$] using $Q=5$ radio stripes of $M=20$ TXs each. For ease of exposition, we map all indexes $l \in \set{L}$ to pairs of indexes $(q,m) \in \set{Q} \times \set{M}$, where $\set{Q} \eqdef  \{ 1,\ldots, Q \}$ and $\set{M} \eqdef \{1,\ldots, M \}$. For example, we denote the channel  $\rmat{H}_l$ between all RXs and the $m$-th TX of the $q$-th radio stripe by $\rmat{H}_{q,m}$. We assume each stripe $q$ is controlled by a separate master processing unit placed at the edge next to TX $(q,1)$, providing coded and modulated data bearing signals $U_k$ to all the TXs along the stripe. To limit the information processing and sharing burden, we let each RX $k$ be served only by the TXs belonging to the subset $\set{Q}_k \subset \set{Q}$ of its $Q_k = 2$ closest radio stripes. This implies that the message for RX $k$ is shared only within the master units controlling the $Q_k$ stripes in $\set{Q}_k$. Similarly, we assume that, at each TX $(q,m)$, the channel $\rvec{g}_{q,m,k}$ towards RX $k$ is perfectly estimated if $q \in \set{Q}_k$, and completely unknown otherwise.  Then, we study the following CSIT sharing patterns:
\begin{enumerate}
\item No CSIT sharing: $S_{q,m} = \hat{\rvec{H}}_{q,m}$;
\item Unidirectional CSIT sharing: $S_{q,m} = (\hat{\rvec{H}}_{q,1},\ldots,\hat{\rvec{H}}_{q,m})$;
\item Bidirectional CSIT sharing: $S_{q,m} = (\hat{\rvec{H}}_{q,1},\ldots,\hat{\rvec{H}}_{q,M})$.
\end{enumerate}
For all the above cases, CSIT sharing is limited within each stripe. 
\begin{figure}[ht]
\centering
\includegraphics[trim=0.5cm 0 0.5cm 1.2cm, clip,width=0.8\columnwidth]{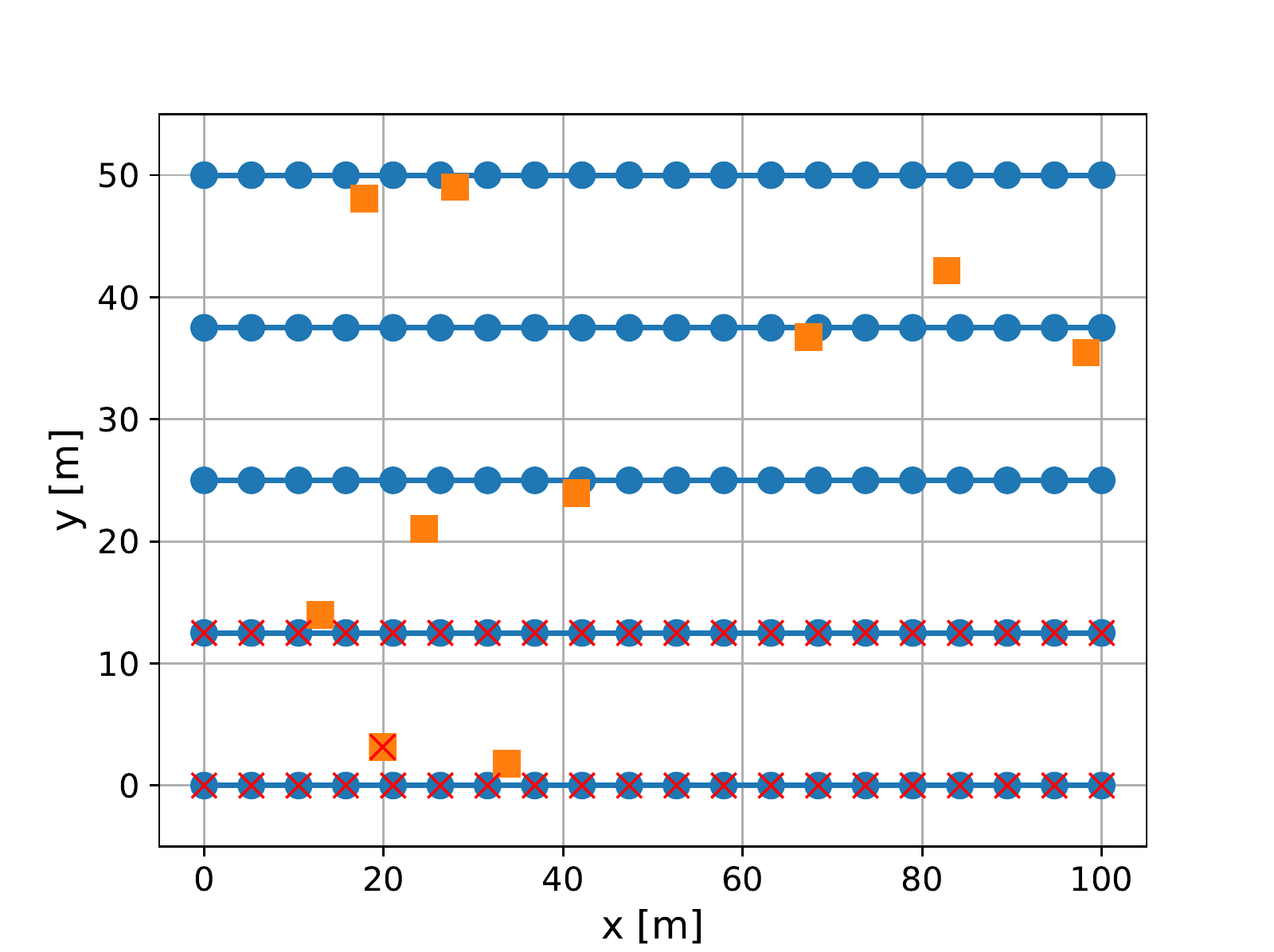}
\caption{Pictorial representation of an IIoT setup with $K=10$ RXs uniformly distributed within a rectangular service area, and a grid of $L=100$ single-antenna TXs arranged over $Q=5$ radio stripes of $M=20$ TXs each. The red crosses identify the set $\set{Q}_k$ of TXs serving an arbitrary RX, obtained from the $Q_k=2$ closest radio stripes.}
\label{fig:IIoT}
\end{figure}

\subsection{Optimal distributed precoding and recursive implementations}
The optimal solution to \eqref{eq:UC-TMMSE} can be obtained in closed form for all the considered information structures. We start with the most involved case, that is, undirectional CSIT sharing: 
\begin{theorem}\label{th:unidirectional}
The TMMSE precoder applied at the $m$-th TX of the $q$-th radio stripe to the message for RX $k$ under unidirectional CSIT sharing is given by
\begin{equation}\label{eq:TMMSE_stripes}
\rvec{t}_{q,m,k}(S_{q,m}) = \rmat{T}_{q,m}\rmat{V}_{q,m}\prod_{n=1}^{m-1} \bar{\rmat{V}}_{q,n}\vec{c}_{q,k}, \quad \forall q \in \set{Q},
\end{equation}
where we use the following short-hands:
\begin{itemize}
\item $\rmat{V}_{q,m} \eqdef (\vec{I}-\vec{\Pi}_{q,m}\rmat{P}_{q,m})^{-1}(\vec{I}-\vec{\Pi}_{q,m})$;
\item $\bar{\rmat{V}}_{q,m} \eqdef \vec{I}-\rmat{P}_{q,m}\rmat{V}_{q,m}$;
\item $\rmat{P}_{q,m} \eqdef \vec{W}^{\frac{1}{2}}\hat{\rmat{H}}_{q,m}\rmat{T}_{q,m}$;
\item $\vec{\Pi}_{q,m} \eqdef \E\left[\rmat{P}_{q,m+1}\rmat{V}_{q,m+1}\right] + \vec{\Pi}_{q,m+1}\E\left[\bar{\rmat{V}}_{q,m+1}\right]$;
\end{itemize}
and where we let $\vec{\Pi}_{q,M} \eqdef \vec{0}_{K\times K}$, and $\{\vec{c}_{q,k}\}_{q=1}^Q$ be the unique solution of the linear system of equations
\begin{equation}
\begin{cases}\vec{c}_{q,k} + \sum_{j \in \set{Q}_k \backslash \{q\}}\vec{\Pi}_{j,0} \vec{c}_{j,k} = \vec{e}_k & \forall q \in \set{Q}_k, \\
\vec{c}_{q,k} = \vec{0}_{K\times 1} & \text{otherwise.}
\end{cases}
\end{equation}
\end{theorem}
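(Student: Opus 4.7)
The plan is to verify directly that the candidate precoder \eqref{eq:TMMSE_stripes} satisfies the stationarity condition \eqref{eq:stationary_imperfect} of Lemma~\ref{lem:stationarity_imperfect}, and then invoke the uniqueness claim of Theorem~\ref{th:quadratic_teams} to conclude optimality. The two structural facts I would exploit are the cross-stripe independence of the channel estimates (and, via Assumption~\ref{ass:CSIT_sharing}, of the local CSIT $S_{q,m}$), together with the nested filtration $S_{q,1}\subseteq\cdots\subseteq S_{q,M}$ produced by unidirectional sharing within each stripe.

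First, I would decouple the problem across stripes. In the sum on the right-hand side of \eqref{eq:stationary_imperfect}, any term indexed by $(q',m')$ with $q'\neq q$ is independent of $S_{q,m}$, so its conditional expectation collapses to an unconditional one. Collecting these cross-stripe contributions defines a deterministic vector $\vec{c}_{q,k}\eqdef\vec{e}_k-\vec{W}^{\frac{1}{2}}\sum_{q'\in\set{Q}_k\setminus\{q\}}\sum_{m'=1}^M\E[\hat{\rmat{H}}_{q',m'}\rvec{t}^\star_{q',m',k}]$ (with $\vec{c}_{q,k}=\vec{0}$ if $q\notin\set{Q}_k$), which plays the role of a stripe-level residual target and reduces \eqref{eq:stationary_imperfect} to a within-stripe recursion.

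Next, within stripe $q$, I would posit $\rvec{t}^\star_{q,m,k}=\rmat{T}_{q,m}\rmat{V}_{q,m}\vec{b}_m$ with the state $\vec{b}_m\eqdef\prod_{n=1}^{m-1}\bar{\rmat{V}}_{q,n}\vec{c}_{q,k}$, and split the remaining within-stripe sum into past ($m'<m$) and future ($m'>m$) indices. The past terms are $S_{q,m}$-measurable and telescope via $\rmat{P}_{q,m'}\rmat{V}_{q,m'}\vec{b}_{m'}=\vec{b}_{m'}-\vec{b}_{m'+1}$ to $\vec{c}_{q,k}-\vec{b}_m$. The future terms involve $\hat{\rmat{H}}_{q,n}$ for $n>m$, which are jointly independent of $S_{q,m}$ and of each other; pulling the independent factors out of the conditional expectation and recognising the resulting finite sum as the unrolled backward recursion defining $\vec{\Pi}_{q,m}$, one obtains $\vec{\Pi}_{q,m}\bar{\rmat{V}}_{q,m}\vec{b}_m$. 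Substituting back reduces \eqref{eq:stationary_imperfect} to the algebraic identity $(\vec{I}-\vec{\Pi}_{q,m}\rmat{P}_{q,m})\rmat{V}_{q,m}=\vec{I}-\vec{\Pi}_{q,m}$, which is precisely the definition of $\rmat{V}_{q,m}$ in the theorem.

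To close the system, I would take an unconditional expectation of the telescoping identity over all $M$ TXs of stripe $q$, yielding $\vec{W}^{\frac{1}{2}}\sum_{m'=1}^M\E[\hat{\rmat{H}}_{q,m'}\rvec{t}^\star_{q,m',k}]=\vec{\Pi}_{q,0}\vec{c}_{q,k}$; plugging this into the definition of $\vec{c}_{q,k}$ produces the coupled linear system stated in the theorem. Optimality and uniqueness of the constructed candidate then follow from Theorem~\ref{th:quadratic_teams}. The main obstacle I anticipate is the bookkeeping in the future-term step: carefully identifying which factors of $\bar{\rmat{V}}_{q,n}$ and $\rmat{P}_{q,m'}\rmat{V}_{q,m'}$ are independent of $S_{q,m}$, and matching the resulting sum of products of expectations against the backward recursion for $\vec{\Pi}_{q,m}$. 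Secondary technical points, namely almost-sure invertibility of $\vec{I}-\vec{\Pi}_{q,m}\rmat{P}_{q,m}$ and unique solvability of the $\{\vec{c}_{q,k}\}$ system, can be argued \emph{a posteriori} from the uniqueness guarantee of Theorem~\ref{th:quadratic_teams}.
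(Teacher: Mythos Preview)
Your proposal is correct and follows essentially the same route as the paper: verify the stationarity conditions of Lemma~\ref{lem:stationarity_imperfect} for the candidate \eqref{eq:TMMSE_stripes}, exploit cross-stripe independence to reduce to a per-stripe problem with deterministic residual target $\vec{c}_{q,k}$, handle the within-stripe sum via the nested unidirectional filtration, and close the loop through the linear system in $\{\vec{c}_{q,k}\}$. The only cosmetic differences are that the paper treats the within-stripe identity first (invoking \cite[Th.~5]{miretti2021team} as a black box) and the cross-stripe reduction second, whereas you reverse this order and spell out the telescoping/backward-recursion arguments yourself; and the paper establishes solvability of the $\{\vec{c}_{q,k}\}$ system via \cite[Th.~4]{miretti2021team}, while you defer it to the uniqueness guarantee of Theorem~\ref{th:quadratic_teams}.
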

\begin{proof}
We need to verify that the proposed solution satisfies the optimality conditions given by Lemma \ref{lem:stationarity_imperfect}. Replacing \eqref{eq:TMMSE_stripes} in \eqref{eq:stationary_imperfect}, after some simple manipulations, we observe that it suffices to verify ($\forall (q,m) \in \set{Q}_k \times \set{M}$)  
\begin{equation}\label{eq:stationarity_stipes}
\begin{split}
&\rmat{V}_{q,m}\prod_{n=1}^{m-1} \bar{\rmat{V}}_{q,n}\vec{c}_{q,k} + \sum_{l \neq m} \E\left[ \rmat{P}_{q,l}\rmat{V}_{q,l}\prod_{n=1}^{l-1} \bar{\rmat{V}}_{q,n}\vec{c}_{q,k} \middle| S_{q,m}\right] \\
& + \sum_{j \in \set{Q}_k \backslash \{q\}} \sum_{l=1}^M\E\left[ \rmat{P}_{j,l}\rmat{V}_{j,l}\prod_{n=1}^{l-1} \bar{\rmat{V}}_{j,n}\vec{c}_{j,k} \middle| S_{q,m}\right] = \vec{e}_k.
\end{split}
\end{equation}
The first and second term in \eqref{eq:stationarity_stipes} correspond to the contribution of the $q$-th radio stripe, seen by TX $(q,m)$. Using the result in \cite[Th.~5]{miretti2021team} for a single radio stripe, which exploit in its proof the recursive structure of the equations and of the information structure at hand, it can be shown that 
\begin{align}
\rmat{V}_{q,m}\prod_{n=1}^{m-1} \bar{\rmat{V}}_{q,n} + \sum_{l \neq m} \E\left[ \rmat{P}_{q,l}\rmat{V}_{q,l}\prod_{n=1}^{l-1} \bar{\rmat{V}}_{q,n} \middle| S_{q,m}\right] = \vec{I}_K,
\end{align}
which simplifies \eqref{eq:stationarity_stipes} to
\begin{equation}
\vec{c}_{q,k} + \sum_{j \in \set{Q}_k \backslash \{q\}} \sum_{l=1}^M\E\left[ \rmat{P}_{j,l}\rmat{V}_{j,l}\prod_{n=1}^{l-1} \bar{\rmat{V}}_{j,n}\vec{c}_{j,k} \middle| S_{q,m}\right] = \vec{e}_k.
\end{equation}
The second term in the above equation corresponds to the contributions of the other radio stripes $j \neq q$. Since $S_{q,m}$ contains no information about the channels at the other radio stripes $j\neq q$,  the optimality conditions can be further simplified to
\begin{equation}
\vec{c}_{q,k} + \sum_{j \in \set{Q}_k \backslash \{q\}} \E\left[ \sum_{l=1}^M\rmat{P}_{j,l}\rmat{V}_{j,l}\prod_{n=1}^{l-1} \bar{\rmat{V}}_{j,n} \right]\vec{c}_{j,k} = \vec{e}_k.
\end{equation}
Following again the same approach as in \cite[Th.~5]{miretti2021team} for a single radio stripe, it can be shown via recursive arguments that 
\begin{align}
&\E\left[\sum_{l=1}^M \rmat{P}_{j,l}\rmat{V}_{j,l}\prod_{n=1}^{l-1} \bar{\rmat{V}}_{j,n} \right] \\
&= \E\left[ \rmat{P}_{j,1}\rmat{V}_{j,1} \right] + \E\left[\sum_{l=2}^M \rmat{P}_{j,l}\rmat{V}_{j,l}\prod_{n=2}^{l-1} \bar{\rmat{V}}_{j,n} \right]\E\left[\bar{\rmat{V}}_{j,1} \right]\\
&= \vec{\Pi}_{j,0}.
\end{align}
The proof is concluded by showing that the resulting system of equations $\vec{c}_{q,k} + \sum_{j \in \set{Q}_k \backslash \{q\}}\vec{\Pi}_{j,0} \vec{c}_{j,k} = \vec{e}_k$, $\forall q \in \set{Q}_k$, always has a unique solution. This can be done by following the same lines as in the proof of \cite[Th.~4]{miretti2021team}.
\end{proof}
Interestingly, \eqref{eq:TMMSE_stripes} can be implemented through a recursive procedure from the $q$-th master unit to TX $(q,M)$ involving the sequential update and forward of a $K$-dimensional complex vector $\rvec{u}^{(q,m)} = \bar{\rmat{V}}_{q,m}\rvec{u}^{(q,m-1)}$, where $\rvec{u}^{(q,0)}\eqdef \sum_{k\in \set Q_k} \vec{c}_{q,k}U_k$ is a vector of statistically precoded messages computed at the $q$-th master unit using $\{\vec{\Pi}_{q,0}\}_{q\in \set{Q}_k}$ for all the RXs $k$ that it is serving. Furthermore, to compute $\bar{\rmat{V}}_{q,m}$, each TX $(q,m)$ needs only the local estimate $\hat{\rmat{H}}_{q,m}$ and $\vec{\Pi}_{q,m}$. The required statistical information can be easily tracked through another recursive procedure involving the sequential update and forward of $\vec{\Pi}_{q,m} \in \stdset{C}^{K\times K}$, in the reverse direction. This leads to a very efficient implementation which requires only some local information exchange scaling with the number of RXs $K$. 

The bidirectional CSIT sharing case can be readily obtained from the proof of Theorem~\ref{th:unidirectional} by omitting the expectation in the summands over $l\neq m$, since $S_{q,m}$ makes these terms deterministic.
\begin{corollary}
The TMMSE precoder applied at the $m$-th TX of the $q$-th radio stripe to the message for RX $k$ under bidirectional CSIT sharing is given by \eqref{eq:TMMSE_stripes}, with $\vec{\Pi}_{q,m}$ replaced by $\bar{\rvec{P}}_{q,m} \eqdef \rmat{P}_{q,m+1}\rmat{V}_{q,m+1} + \bar{\rvec{P}}_{q,m+1}\bar{\rmat{V}}_{q,m+1}$ in the computation of $\rvec{V}_{q,m}$.
\end{corollary}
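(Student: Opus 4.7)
The plan is to mirror the proof of Theorem~\ref{th:unidirectional}, exploiting the stronger CSIT structure to simplify several steps. The key observation is that under bidirectional sharing, $S_{q,m}$ generates the same $\sigma$-algebra as the full tuple $(\hat{\rmat{H}}_{q,1},\ldots,\hat{\rmat{H}}_{q,M})$, so every intra-stripe quantity, including the proposed $\rmat{V}_{q,l}$, $\bar{\rmat{V}}_{q,l}$, and $\bar{\rvec{P}}_{q,l}$ for all $l\in\set{M}$, is $S_{q,m}$-measurable. Consequently, every intra-stripe conditional expectation that appears in \eqref{eq:stationarity_stipes} simply collapses to its integrand.

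First, I would substitute the candidate precoder into the optimality condition \eqref{eq:stationary_imperfect} supplied by Lemma~\ref{lem:stationarity_imperfect}. Following the same manipulations as in Theorem~\ref{th:unidirectional}, the conditions reduce, for each $(q,m)\in\set{Q}_k\times\set{M}$, to an equation of the form \eqref{eq:stationarity_stipes} in which the sum over $l\neq m$ within stripe $q$ is now a deterministic sum (no conditional expectation), while the cross-stripe terms over $j\in\set{Q}_k\setminus\{q\}$ still carry $\E[\,\cdot\,|S_{q,m}]$, since $S_{q,m}$ contains no information about other stripes.

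The main obstacle is then to verify the pathwise identity
\begin{equation*}
\rmat{V}_{q,m}\prod_{n=1}^{m-1}\bar{\rmat{V}}_{q,n} + \sum_{l\neq m}\rmat{P}_{q,l}\rmat{V}_{q,l}\prod_{n=1}^{l-1}\bar{\rmat{V}}_{q,n} = \vec{I}_K,
\end{equation*}
which is the deterministic analogue of the identity used in Theorem~\ref{th:unidirectional}. This should follow by replaying the recursive argument of \cite[Th.~5]{miretti2021team} without taking expectations: the backward recursion $\bar{\rvec{P}}_{q,m} = \rmat{P}_{q,m+1}\rmat{V}_{q,m+1} + \bar{\rvec{P}}_{q,m+1}\bar{\rmat{V}}_{q,m+1}$ mirrors term by term the recursion defining $\vec{\Pi}_{q,m}$, and the choice $\rmat{V}_{q,m} = (\vec{I}-\bar{\rvec{P}}_{q,m}\rmat{P}_{q,m})^{-1}(\vec{I}-\bar{\rvec{P}}_{q,m})$ is tailored precisely so that the identity holds for every realization rather than only in expectation. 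I expect the careful algebraic unwinding of this coupled recursion, starting from the terminal condition $\bar{\rvec{P}}_{q,M}=\vec{0}$ and working down to $m=1$, to be the most delicate step.

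Once the identity is established, the remaining equation collapses to $\vec{c}_{q,k} + \sum_{j\in\set{Q}_k\setminus\{q\}}\vec{\Pi}_{j,0}\vec{c}_{j,k} = \vec{e}_k$, exactly as in Theorem~\ref{th:unidirectional}, and existence and uniqueness of $\{\vec{c}_{q,k}\}$ follow from \cite[Th.~4]{miretti2021team}. Since all quantities employed by TX $(q,m)$ are $S_{q,m}$-measurable under bidirectional CSIT sharing, the proposed precoder lies in $\set{T}^{(k)}$, and satisfying the optimality conditions of Lemma~\ref{lem:stationarity_imperfect} concludes the proof by the uniqueness part of Theorem~\ref{th:quadratic_teams}.
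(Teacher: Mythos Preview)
Your proposal is correct and follows essentially the same approach as the paper: the paper's entire argument is that the bidirectional case ``can be readily obtained from the proof of Theorem~\ref{th:unidirectional} by omitting the expectation in the summands over $l\neq m$, since $S_{q,m}$ makes these terms deterministic,'' which is exactly the mechanism you spell out in detail. Your additional remarks on the pathwise identity, the unchanged cross-stripe expectations, and measurability simply make explicit what the paper leaves implicit.
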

Similarly to \eqref{eq:TMMSE_stripes}, the above scheme can be also implemented using a recursive procedure. The main difference is that, in contrast to $\vec{\Pi}_{q,m}$, $\bar{\rvec{P}}_{q,m}$ needs to be computed for every channel realization. Note that all $\vec{\Pi}_{q,0}$ still need to be acquired for the computation of the statistical precoders $\vec{c}_{q,k}$.

We finally consider the no CSIT sharing case. This case is better covered by remapping $\set{Q}\times \set{M}$ to the original index set $\set{L}$, and defining the sets $\set{L}_k$ from $\set{Q}_k\times \set{M}$ accordingly. 
\begin{theorem}
The TMMSE precoder applied at the $l$-th TX to the message for RX $k$ under no CSIT sharing $S_l = \hat{\rmat{H}}_{l}$ is given by
\begin{equation}\label{eq:TMMSE_local}
\rvec{t}_{l,k}(S_l) = \rmat{T}_l\vec{c}_{l,k}, \quad \forall l \in \set{L},
\end{equation}
where we define $\vec{\Pi}_l \eqdef \E\left[\vec{W}^{\frac{1}{2}}\hat{\rmat{H}}_l\rmat{T}_l\right] $ and let $\{\vec{c}_{l,k}\}_{l=1}^L$ be the unique solution of the linear system of equations
\begin{equation}
\begin{cases}\vec{c}_{l,k} + \sum_{j \in \set{L}_k \backslash \{l\}}\vec{\Pi}_j \vec{c}_{j,k} = \vec{e}_k & \forall l \in \set{L}_k, \\
\vec{c}_{l,k} = \vec{0}_{K\times 1} & \text{otherwise.}
\end{cases}
\end{equation}
\end{theorem}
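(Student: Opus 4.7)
The plan is to verify that the proposed precoder satisfies the necessary and sufficient optimality conditions provided by Lemma \ref{lem:stationarity_imperfect}, following the same verification strategy used in the proof of Theorem \ref{th:unidirectional} but exploiting the much simpler information structure of the no-CSIT-sharing case.

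For $l\notin \set{L}_k$ the proposed precoder vanishes, matching the constraint imposed by $\set{T}_l^{(k)}$. For $l\in \set{L}_k$, I would substitute $\rvec{t}^\star_{l,k}(S_l) = \rmat{T}_l\vec{c}_{l,k}$ into \eqref{eq:stationary_imperfect}. The key simplification is that under $S_l = \hat{\rmat{H}}_l$, Assumption \ref{ass:TDD} guarantees that $S_l$ is independent of $\hat{\rmat{H}}_j$ (hence of $\rmat{T}_j$) for every $j \neq l$, so each conditional expectation reduces to an unconditional one:
\[
\vec{W}^{\frac{1}{2}}\E\bigl[\hat{\rmat{H}}_j\rvec{t}^\star_{j,k} \,\bigm|\, S_l\bigr] = \vec{W}^{\frac{1}{2}}\E[\hat{\rmat{H}}_j \rmat{T}_j]\vec{c}_{j,k} = \vec{\Pi}_j \vec{c}_{j,k}.
\]
Plugging this back, the stationarity condition becomes
\[
\rmat{T}_l\vec{c}_{l,k} = \rmat{T}_l\Bigl(\vec{e}_k - \sum_{j\in \set{L}_k\setminus\{l\}} \vec{\Pi}_j \vec{c}_{j,k}\Bigr),
\]
which is satisfied a.s. as soon as the deterministic vectors $\{\vec{c}_{l,k}\}$ solve the linear system $\vec{c}_{l,k} + \sum_{j\in \set{L}_k\setminus\{l\}} \vec{\Pi}_j \vec{c}_{j,k} = \vec{e}_k$ for every $l\in \set{L}_k$, i.e., exactly the system in the statement.

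The remaining and arguably most delicate step is to show that this linear system always admits a unique solution, so that $\{\vec{c}_{l,k}\}$ is well-defined. I expect this to follow essentially verbatim from the uniqueness argument invoked at the end of the proof of Theorem \ref{th:unidirectional}, which in turn relies on \cite[Th.~4]{miretti2021team}: writing the system in block form, the coefficient matrix is built from the MMSE-type blocks $\vec{\Pi}_j$ and can be shown to be invertible. By the uniqueness part of Lemma \ref{lem:stationarity_imperfect}, the verified candidate is then necessarily the optimal TMMSE precoder, which concludes the proof.
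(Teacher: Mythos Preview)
Your proposal is correct and follows essentially the same approach as the paper: verify that the candidate satisfies the stationarity conditions of Lemma~\ref{lem:stationarity_imperfect} by exploiting the independence of $S_l=\hat{\rmat{H}}_l$ from $\hat{\rmat{H}}_j$ for $j\neq l$, and defer the invertibility of the resulting linear system to the argument of \cite[Th.~4]{miretti2021team}. The paper's own proof is in fact terser than yours, merely stating that the details mirror the full message sharing case in \cite[Th.~4]{miretti2021team}.
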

\begin{proof}
The proof follows by verifying that \eqref{eq:TMMSE_local} satisfies the optimality coditions \eqref{eq:stationary_imperfect}. The details are similar to the full message sharing case given by \cite[Th.~4]{miretti2021team}, hence they are omitted.
\end{proof}
As for \cite[Th.~4]{miretti2021team}, it can be seen that under certain conditions typically modeling non line-of-sight (NLoS) channels and no pilot contamination, \eqref{eq:TMMSE_local} coincides with the known \textit{local} MMSE solution $\rvec{t}_{l,k}(S_l) = c_{l,k}\rmat{T}_l\vec{e}_k$ with optimized \textit{large-scale fading} coefficients $c_{l,k}$ given, e.g., in \cite{demir2021foundations}. However, \eqref{eq:TMMSE_local} outperforms local MMSE precoding in the general case. Note that \eqref{eq:TMMSE_local} can also be used to derive an equivalent scheme for bidirectional CSIT sharing, by simply interpreting all the TXs of stripe $q$ as a single TX with $NM$ antennas, and assume no CSIT sharing. However, this interpretation does not easily lead to a recursive implementation, and it is more suitable for a setup where each master unit is responsible for the computation of all the precoders for its stripe, instead of distributing the task.

\begin{figure*}[!ht] 
\centerline{
%\subfloat[]{\includegraphics[width = 0.45\linewidth]{network} \label{fig:IIoT}}
%\hfil
\subfloat[]{\includegraphics[trim=0.5cm 0 0.5cm 1.2cm, clip, width = 0.45\linewidth]{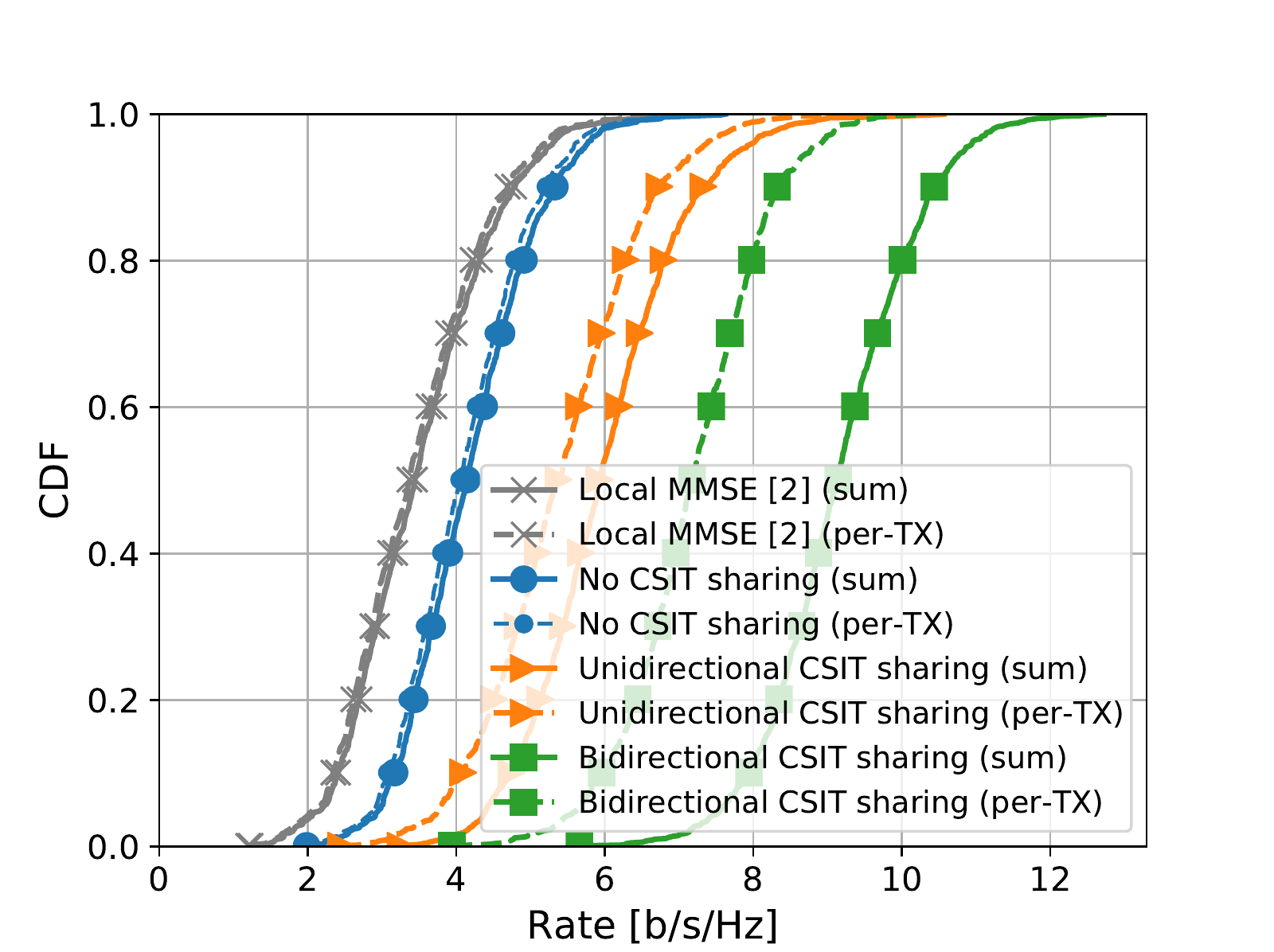} \label{fig_first_case}} 
\hfil 
\subfloat[]{\includegraphics[trim=0.5cm 0 0.5cm 1.2cm, clip, width = 0.45\linewidth]{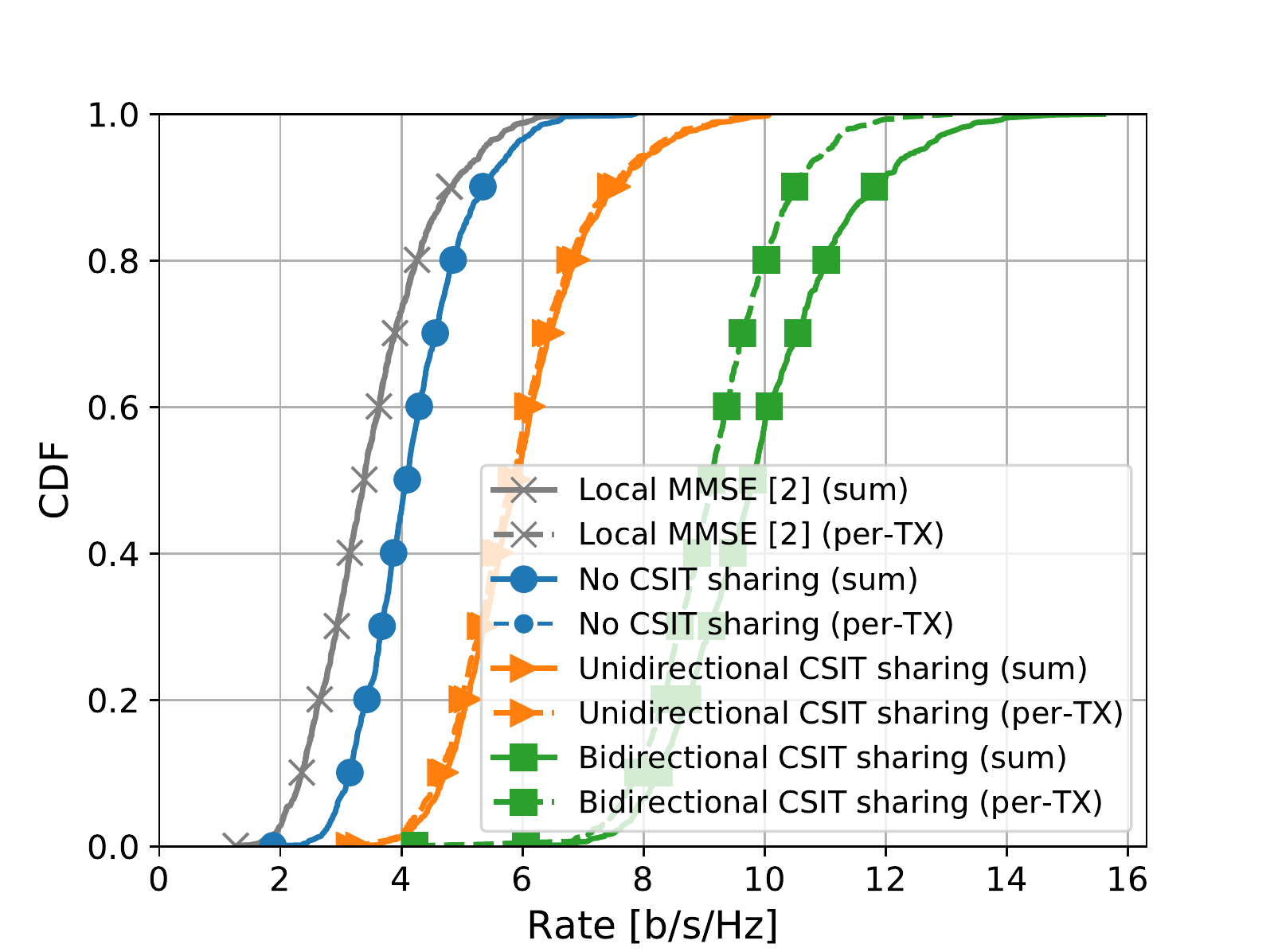} \label{fig_second_case}}} 
\caption{CDF of the rate achieved by TMMSE precoding under different CSIT sharing patterns, and by assuming a symmetric per-TX power constraint (a) $P_l = 1$ mW, and (b) $P_l = 10$ mW, $\forall l \in \set{L}$. As expected, sharing CSIT leads to performance gains, and the unidirectional case strikes an interesting compromise between the considered examples. Furthermore, the sum-power upper bound is approached as the system becomes interference limited, e.g., for lower interference suppression capabilities and/or higher power budget. Finally, the local MMSE baseline is outperformed by the competing TMMSE solution under no CSIT sharing, due to the presence of LoS components ($\kappa = 6$).} \label{fig:sim} 
\end{figure*} 

\subsection{Numerical simulations}
We let each channel coefficient $H_{l,k}$ between TX~$l$ and RX~$k$ be independently distributed as $H_{l,k} \sim \CN\left(\sqrt{\frac{\kappa}{\kappa+1}\rho^2_{l,k}}, \frac{1}{\kappa+1}\rho^2_{l,k}\right)$, where $\kappa$ denotes a common Ricean factor, and $\rho^2_{l,k}$ denotes the channel gain between TX $l$ and RX $k$. We follow the 3GPP NLoS-DH path-loss model for IIoT applications \cite{jiang2021iiot}
\begin{equation}
\mathrm{PL}_{l,k} = 21.9 \log_{10}\left(\dfrac{d_{l,k}}{1 \; \mathrm{m}}\right) + 33.6 + 20\log_{10}\left(\dfrac{f_c}{1 \; \mathrm{GHz}}\right) + Z_{l,k} \; [\text{dB}],
\end{equation}
where $f_c = 4.9$ GHz is the carrier frequency, $d_{l,k}$ is the distance between TX $l$ and RX $k$ including a difference in height of $7$ m, and $Z_{l,k}\sim \CN(0,\sigma^2)$ is an independent shadow fading term with standard deviation $\sigma = 4$. We let the noise power at all RXs be given by
$
P_{\mathrm{noise}} = -174 + 10 \log_{10}(B) + F$ dBm,
where $B = 100$ MHz is the system bandwidth, and $F = 7$ dB is the noise figure. Finally, we let $\rho^2_{l,k} := 10^{-\frac{\mathrm{PL}_{l,k} +P_{\mathrm{noise}}}{10}}$ $\text{mW}^{-1}$, and $\kappa = 6$. 

Fig.~\ref{fig:sim} reports the empirical cumulative density function (CDF) of the rates \eqref{eq:hardening_bound} achieved by TMMSE precoding under the considered CSIT sharing patterns, for $100$ realizations of the RX positions. Each RX position is independently and uniformly drawn within the considered service area. We study both the sum-power upper bound and the suboptimal adaptation \eqref{eq:power_scaling} to a per-TX power constraint. We focus on the boundary point of $\set{R}_{\mathrm{sum}}(P)$ given by $\vec{w} = \vec{1}/K$. For the sum-power constraint, the DL power allocation $\vec{p}$ is obtained using the UL-DL principle \cite{massivemimobook}, by interpreting $\vec{w}$ as  dual UL powers. For the per-TX power constraint, $\vec{p}$ is further scaled by $\nu^2$. As a baseline, we also report the performance of local MMSE precoding with optimized large-scale fading coefficients \cite{demir2021foundations}.  

The first remarkable observation is that, while having roughly the same information sharing overhead on the order of $K$ complex symbols per channel use and stripe span, the TMMSE solution under unidirectional CSIT sharing significantly improves upon the no CSIT sharing case. Therefore, it can be seen as a promising intermediate solution between no CSIT sharing and bidirectional CSIT sharing. However, this comes at the expense of additional computational complexity at each TX to implement the recursive routine. 

A second observation is that, as expected, the approach of scaling the TMMSE solution to satisfy the per-TX power constraint is basically optimal whenever the dominant performance limit is the residual interference. This is evident for the schemes with lower interference suppression capabilities, and/or for higher power budgets. However, the gap w.r.t. the sum-power upper bound increases for low powers and for the better performing information structures. In this case, further research is needed to understand whether this gap is close to the fundamental limit or only an artifact of the suboptimal approach. 

Finally, we observe that the baseline local MMSE method is not optimal under no CSIT sharing. This is motivated by the fact that, similarly to other local methods such as MRT, local MMSE does not handle well the interference originating from the channel mean. 

\bibliographystyle{IEEEbib}
\bibliography{refs}

\end{document}